
\typeout{IJCAI--25 Instructions for Authors}


\documentclass{article}
\pdfpagewidth=8.5in
\pdfpageheight=11in

\usepackage{ijcai25}

\usepackage{times}
\usepackage{soul}
\usepackage{url}
\usepackage[hidelinks]{hyperref}
\usepackage[utf8]{inputenc}
\usepackage[small]{caption}
\usepackage{graphicx}
\usepackage{amsmath}
\usepackage{amsthm}
\usepackage{booktabs}
\usepackage{algorithm}
\usepackage{algorithmic}
\usepackage[switch]{lineno}
\usepackage{amssymb,mathabx}
\usepackage{xcolor}

\usepackage{fancyhdr}
\pagestyle{fancy}
\setlength{\headheight}{14pt}
\fancypagestyle{firstpage}{
  \fancyhf{}  
  \fancyhead[R]{\textit{Accepted to IJCAI 2025.}}
  \fancyfoot[C]{\thepage} 
  
}
\fancyhf{} 
\fancyfoot[C]{\thepage} 


\urlstyle{same}



\newtheorem{theorem}{Theorem}
\newtheorem{definition}{Definition}
\newtheorem{assumption}{Assumption}
\newtheorem{lemma}{Lemma}

\newcommand{\norm}[1]{||#1||}





\pdfinfo{
/TemplateVersion (IJCAI.2025.0)
}

\title{KP-PINNs: Kernel Packet Accelerated Physics Informed Neural Networks}



\author{
Siyuan Yang $^{1,2}$
\and
Cheng Song $^1$\and
Zhilu Lai $^{2,3}$\And
Wenjia Wang $^1$  \thanks{Corresponding author.}  \\
\affiliations
$^1$ Data Science and Analytics, The Hong Kong University of Science and Technology (Guangzhou) \\
$^2$ Internet of Things, The Hong Kong University of Science and Technology (Guangzhou) \\
$^3$ Department of CEE, The Hong Kong University of Science and Technology  \\
\emails
\{syang724, csong750\}@connect.hkust-gz.edu.cn,
zhilulai@ust.hk,
wenjiawang@hkust-gz.edu.cn
}

\begin{document}

\thispagestyle{firstpage}

\maketitle

\begin{abstract}
Differential equations are involved in modeling many engineering problems. Many efforts have been devoted to solving differential equations. Due to the flexibility of neural networks, Physics Informed Neural Networks (PINNs) have recently been proposed to solve complex differential equations and have demonstrated superior performance in many applications. While the $L_2$ loss function is usually a default choice in PINNs, it has been shown that the corresponding numerical solution is incorrect and unstable for some complex equations. In this work, we propose a new PINNs framework named Kernel Packet accelerated PINNs (KP-PINNs), which gives a new expression of the loss function using the Reproducing Kernel Hilbert Space (RKHS) norm and uses the Kernel Packet (KP) method to accelerate the computation. Theoretical results show that KP-PINNs can be stable across various differential equations. Numerical experiments illustrate that KP-PINNs can solve differential equations effectively and efficiently. This framework provides a promising direction for improving the stability and accuracy of PINNs-based solvers in scientific computing.   
\end{abstract}

\section{Introduction} \label{s:intro}
Differential equations, including ordinary differential equations (ODEs) and partial differential equations (PDEs), are widely applied in engineering for modelling, analyzing, and optimizing dynamic systems \cite{kreyszig2007advanced}. They provide a mathematical framework for describing natural phenomena and solving complex problems. 
Differential equations are widely used in simulating fluid dynamics \cite{ragheb1976computational}, electromagnetics \cite{deschamps1981electromagnetics}, heat transfer \cite{isachenko1980heat} and acoustics \cite{kaltenbacher2018computational}
Differential equations solutions are crucial for accurately predicting and optimizing the behaviour of dynamic systems, ensuring effective design and operation in engineering applications. 

Because analytical solutions of differential equations are often difficult to obtain in practice, researchers seek numerical methods instead. For example, Euler's method and Runge-Kutta method \cite{atkinson2009numerical} can be applied to solving ODEs, and finite element method (FEM), finite difference method (FDM), finite volume method (FVM), and spectral methods can be used to solve PDEs. These methods provide a foundation for solving various differential equations encountered in engineering and scientific applications. However, such methods require substantial computational resources and memory, particularly in high-dimensional or complex physical scenarios. Moreover, they lack flexibility, as adapting to different physical problems or parameter changes often needs significant model reconfiguration \cite{karniadakis2021physics}.

With the development of deep learning, Physics informed neural networks (PINNs) incorporate physical laws into training neural networks to solve differential equations \cite{kianiharchegani2023data}, particularly PDEs, by effectively combining deep learning with domain-specific knowledge. PINNs have strong flexibility and can be effectively combined with existing traditional methods. However, PINNs can lead to incorrect solutions for some complex PDEs, affecting the accuracy of the solutions obtained \cite{zhang2024physics}.
Several methods are proposed to address this issue. For example, \cite{haitsiukevich2022improved} used ensemble agreement for domain expansion, and the improved algorithm can make the training of PINNs more stable. \cite{hu2021extended} theoretically analyzed the convergence and generalization of extended PINNs. \cite{yu2022gradient} proposed gradient-enhanced PINNs, which utilize the gradient information of PDEs residuals and embed the gradient into the loss function to improve the effectiveness of solving PDEs. Other works include \cite{jung2023exploring,forootani2024gs,aliakbari2023physics,yuan2022pinn,lin2022multi}.

However, it has been noticed that the original PINNs are not stable \cite{wang20222}, due to the default $L_2$ loss function. To address the instability of the original PINNs, \cite{wang20222} proposed the $L_\infty$ loss function and proposed an adversarial training method. However, the computation of $L_\infty$ norm is complicated, and the adversarial training is time-consuming. Another class of loss functions is based on the Sobolev norm, which is proposed by \cite{son2021sobolev}, and the corresponding method is named as Sobolev-PINNs. However, the computation of the Sobolev norm in Sobolev-PINNs is complex, and accurate derivative information is hard to get. 

In this work, we propose a novel form of loss function using the Reproducing Kernel Hilbert Space (RKHS) norm. Under specific choice of the kernel function, the RKHS norm is equivalent to the (tensored) Sobolev norm, thus addressing the stability issue of the original PINNs. The proposed method can also avoid directly computing the derivative because the RKHS norm of a function can be approximated by the RKHS norm of its interpolation, while the later has a close form. To compute the RKHS norm, we employ Kernel Packet (KP) method \cite{chen2022kernel}, which can efficiently compute the inverse of the kernel matrix. Therefore, KP accelerated PINNs (KP-PINNs) can solve differential equations effectively and efficiently. We also provide theoretical guarantee for our proposed method.

The rest is organized as follows.
Section \ref{s:prel} briefly introduces differential equations, PINNs and KP.
Section \ref{s:method} analyses the proposed KP-PINNs algorithm.
Section \ref{s:theor} gives some theoretical results.
Section \ref{s:exper} shows four numerical examples to demonstrate the performance of the KP-PINNs algorithm.
Section \ref{s:conclusion} gives this research's discussion and future work.

\section{Preliminaries} \label{s:prel}

\subsection{Differential Equations}  \label{s:prel.1}
Differential equations are mathematical equations that describe the relationship between a function and its derivatives, representing how a quantity changes over time or space. They are broadly classified into ordinary differential equations (ODEs), which involve functions of a single variable, and partial differential equations (PDEs), which involve functions of multiple variables. Differential equations are foundational in modeling dynamic systems and natural phenomena. By capturing the principles of change and interaction, they provide powerful tools for analyzing and solving complex problems.
In general, a differential equation system can be expressed \cite{evans2022partial} by:
\begin{equation} \label{eq: normalized pde}
\left\{\begin{array}{ll}
\mathcal{L}_{\theta} u(\mathbf{x}) = f_{\theta}(\mathbf{x}), & \mathbf{x} \in \Omega \subset \mathbb{R}^{d},  \\ 
\mathcal{B}_{\theta} u(\mathbf{x}) = g_{\theta}(\mathbf{x}), & \mathbf{x} \in \partial \Omega,
\end{array}\right. 
\end{equation}
where $\Omega$ is an open domain with boundary $\partial \Omega$, $\mathcal{L}_{\theta}$ represents a partial differential operator, $\mathcal{B}_{\theta}$ denotes the boundary condition, $f_{\theta}$ and $g_{\theta}$ are two known functions defined on $\Omega$ and $\partial\Omega$, respectively, $\theta$ is the parameter, and $u$ is an unknown function. The goal of solving a differential equation system is to find $u$ such that \eqref{eq: normalized pde} holds. Usually, \eqref{eq: normalized pde} is called ODE if $d=1$, and PDE if $d>1$. An example of PDEs is the second-order linear PDE, with
\begin{equation}\label{eq: second-order linear PDEs}
    \mathcal{L}_{\theta} u = \sum_{i=1}^d \sum_{j=1}^d \alpha_{i,j} \frac{\partial^2u}{\partial x_i\partial x_j} + \sum_{k=1}^d \beta_k \frac{\partial u}{\partial x_k} + \gamma u,
\end{equation}
$\mathbf{x} = (x_1,...,x_d)^\top \in \mathbb{R}^{d}$, and the parameters are $\alpha_{i,j}, \beta_k, \gamma$ for $i,j,k\in \{1,\ldots,d\}$. 

Besides the linear PDEs, where $\mathcal{L}_{\theta}$ is a linear differential operator, nonlinear PDEs, in which the solution or its derivatives appear nonlinearly, make them more complex to solve compared to linear PDEs. Because of the flexibility of nonlinear PDEs, it has been widely applied in fluid dynamics \cite{yan2025deep}, optics \cite{evans1989pde}, and biology \cite{ghergu2011nonlinear}. One particular example is the Hamilton-Jacobi-Bellman (HJB) equations, which are widely applied in electro-hydraulic systems \cite{guo2022optimal}, energy systems \cite{sieniutycz2000hamilton} and finance \cite{witte2011penalty}. 
The HJB equations characterise the optimal control strategy for a dynamic system under certain constraints and objectives. Specifically, the form of the HJB equations \cite{yong2012stochastic} can be expressed by
\begin{equation} \label{eq: HJB}
\begin{cases}
\partial_t u(\mathbf{x}, t) + \frac12 \sigma^2 \Delta u(\mathbf{x}, t) + \min_{k \in \mathcal{K}}, \\ 
\qquad [r(\mathbf{x}, k(\mathbf{x}, t)) + \nabla u \cdot k_t] = 0, \\
u(\mathbf{x}, T) = g(\mathbf{x}), (\mathbf{x}, t) \in \mathbb{R}^d \times [0,T],
\end{cases}
\end{equation}
where $u(\mathbf{x}, t)$ is the value function, $k(\mathbf{x}, t)$ is the given control function, $r(\mathbf{x}, k)$ is the cost rate during the process and $g(\mathbf{x})$ is the final cost at the terminal state \cite{wang20222}. If the cost rate function is $r(\mathbf{x}, k) = a_1 |k_1|^{b_1} + \cdots + a_d |k_d|^{b_d} - h(\mathbf{x}, t)$ then the HJB equations can be expressed \cite{yong2012stochastic} as
\begin{equation} \label{eq: reformulated HJB}
\begin{cases}
\mathcal{L}_\text{HJB}u := \partial_t u(\mathbf{x}, t) + \frac12 \sigma^2 \Delta u(\mathbf{x}, t) - \\ 
\qquad \sum_{i=1}^d A_i |\partial_{\mathbf{x}_i}u|^{c_i} = h(\mathbf{x}, t),  \\
\mathcal{B}_\text{HJB}u := u(\mathbf{x}, T) = g(\mathbf{x}), (\mathbf{x}, t) \in \mathbb{R}^d \times [0,T],
\end{cases}
\end{equation}
where $A_i = (a_i b_i)^{-\frac{1}{b_i - 1}} - a_i (a_i b_i)^{-\frac{b_i}{b_i-1}} > 0$ with $c_i = \frac{b_i}{b_i-1} > 1$, $\mathcal{L}_0 u:= \frac{\partial}{\partial t} u - \Delta u$ is a linear operator, and $\tilde{\mathcal{L}}_\text{HJB}u:= \mathcal{L}_0 u + \sum_{i=1}^d A_i|\partial_iu|^{c_i}$ is a nonlinear operator.

In the forward problem of differential equations, the goal is to obtain a function $u$ such that \eqref{eq: normalized pde} is satisfied, where all the operators and functions $\mathcal{L}_\theta$, $\mathcal{B}_\theta$, $h_\theta$ and $g_\theta$ are known. In practice, there are many scenarios where the parameter $\theta$ is unknown and needs to be estimated based on the equation form and observed data. This is called inverse problems, where the goal is to recover the parameters. In this work, we consider both forward problems and inverse problems.

\subsection{Physics Informed Neural Networks} \label{s:prel.2}
Physics informed neural networks (PINNs) provide a flexible way for numerically solving the differential equation systems \eqref{eq: normalized pde}. 

For notational convenience, let
\begin{equation} \label{eq: h12}
h_1 = \mathcal{L} u - f_{\theta}, h_2 = \mathcal{B} u - g_{\theta}.
\end{equation}
Clearly, $u$ is the solution to \eqref{eq: normalized pde} if and only if $h_1=h_2=0$. Therefore, for the forward problem, it is natural to minimize the loss function
\begin{align}\label{eq_L2loss}
    {\rm Loss}_{L_2} = \norm{h_1}_{L_2(\Omega)}^2 + \norm{h_2}_{L_2(\partial \Omega)}^2,
\end{align}
which was considered by \cite{raissi2019physics}. Choose $\mathbf{x}^i\in \Omega$, $i=1,...,N_{\mathcal{L}}$ and $\mathbf{s}^i\in \partial \Omega$, $i=1,...,N_{\mathcal{B}}$. Then \eqref{eq_L2loss} can be approximated by
\begin{align}\label{eq: MSE forward}
    {\rm Loss}(u) = {\rm Loss}_{\mathcal{L}}(u) + {\rm Loss}_{\mathcal{B}}(u),
\end{align}
with
\begin{align*}
    {\rm Loss}_{\mathcal{L}} = \frac{1}{N_{\mathcal{L}}} \sum_{i=1}^{N_{\mathcal{L}}} {h_1(\mathbf{x}^i)}^2 \quad {\rm and} \quad {\rm Loss}_{\mathcal{B}} = \frac{1}{N_{\mathcal{B}}} \sum_{i=1}^{N_{\mathcal{B}}} {h_2(\mathbf{s}^i)}^2.
\end{align*}
PINNs use a deep neural network, denoted by $u_{nn}$, to minimize the loss function in \eqref{eq: MSE forward}. 

For the inverse problems, suppose that we have observed function values at points $\mathbf{x}^i\in \Omega$, $i=1,...,N_{\mathcal{L}}$ and $\mathbf{s}^i\in \partial \Omega$, $i=1,...,N_{\mathcal{B}}$, denoted by $u_d(\mathbf{x}^i)$ and $u_d(\mathbf{s}^i)$, respectively. We define the loss function in the inverse problem as
\begin{equation}\label{eq: MSE inverse}
    {\rm Loss}_{{\rm Inv}}(u) =  {\rm Loss}(u) + {\rm Loss}_{\mathcal{D}}(u),
\end{equation}
where ${\rm Loss}(u)$ is as in \eqref{eq: MSE forward}, and
\begin{align*}
    {\rm Loss}_{\mathcal{D}}(u) =  \frac{1}{N_{\mathcal{L}}} \sum_{i=1}^{N_{\mathcal{L}}} {h_3(\mathbf{x}^i)}^2 + \frac{1}{N_{\mathcal{B}}} \sum_{i=1}^{N_{\mathcal{B}}} {h_3(\mathbf{s}^i)}^2
\end{align*}
with $h_3 = u - u_d$. The goal of solving an inverse problem is to solve
\begin{equation}
    \min_{\substack{\theta \in \Theta}} {\rm Loss}_{{\rm Inv}}(u),
\end{equation}
where $\Theta$ is the parameter space. In practice, the minimizers of the loss functions in both forward and inverse problems can be found via optimization algorithms such as Adam \cite{kingma2014adam} and L-BFGS \cite{byrd1995limited}.

\subsection{Kernel Packet} \label{s:prel.4}
Our proposed method relies on \emph{Kernel Packet} (KP), which was proposed in \cite{chen2022kernel}. 
In functional analysis, a Hilbert space of functions is termed a Reproducing Kernel Hilbert Space (RKHS) if the evaluation is a continuous linear operator at any point within the space.
Assume that $\Omega \subset \mathbb{R}^d$ is compact with Lipschtiz boundary, and $\Phi: \Omega \times \Omega \to \mathbb{R}$ is a symmetric positive definite kernel function \cite{wang2022gaussian,wang2020prediction}. Define the linear space
\begin{equation*}
F_\Phi(\Omega) = \left\{ \sum_{i=1}^N \beta_i \Phi(\cdot, x_i) : \beta_i \in \mathbb{R}, x_j \in \Omega, N \in \mathbb{N} \right\}
\end{equation*}
and equip this space with the bilinear form
\begin{equation*}
\left\langle \sum_{i=1}^N \beta_i \Phi(\cdot, x_i), \sum_{j=1}^M \gamma_j \Phi(\cdot, x_j') \right\rangle_K:= 
\sum_{i=1}^N \sum_{j=1}^M \beta_i \gamma_j \Phi(x_i, x_j'),
\end{equation*}
where the RKHS $\mathcal{N}_\Phi(\Omega)$ generated by kernel $\Phi$ is defined as the closure of $F_\Phi(\Omega)$ under the inner product $\langle \cdot, \cdot \rangle_\Phi$. The norm of $\mathcal{N}_\Phi(\Omega)$ is $\norm{f}_{\mathcal{N}_\Phi(\Omega)} = \sqrt{\langle f, f \rangle_{\mathcal{N}_\Phi(\Omega)}}$, where $\langle \cdot, \cdot \rangle_{\mathcal{N}_\Phi(\Omega)}$ is induced by $\langle \cdot, \cdot \rangle_\Phi$. 

Let $\mathbf{x}_1,...,\mathbf{x}_n\in \Omega\cup \partial \Omega$ be the points of interest. KP provides an efficient way to compute the inverse of the kernel matrix $\mathbf{K}:= (\Phi(\mathbf{x}_j,\mathbf{x}_k))_{jk} \in \mathbb{R}^{n\times n}$, which is essential in the application of the RKHSs, for example, prediction in kernel ridge regression, and computing the posterior variance in Bayesian optimization. 
In KP, it is assumed that the kernel function $\Phi$ is a Mat\'ern kernel function \cite{abramowitz1968handbook} as 
\begin{align}    \label{eq: K_{p+1/2}}
\Phi_{\nu}(\mathbf{x}_j,\mathbf{x}_k)=\exp\left(-{\ell}{\sqrt{2\nu}|\mathbf{x}_j-\mathbf{x}_k|}\right)\frac{{(\nu-\frac{1}{2})}!}{(2\nu-1)!} \nonumber \\
\sum_{i=0}^{\nu-\frac{1}{2}}\frac{(\nu-\frac{1}{2}+i)!}{i!(\nu-\frac{1}{2}-i)!}\left({2\ell}{\sqrt{2\nu}|\mathbf{x}_j-\mathbf{x}_k|}\right)^{\nu-\frac{1}{2}-i},
\end{align}
where $\ell>0$ is the scale parameter, and $\nu$ is the half integer smoothness parameter. For example, when $\nu$ is equal to $\frac{1}{2}$, the corresponding kernel function is $\Phi_{\frac{1}{2}}(\mathbf{x}_j,\mathbf{x}_k)=\exp\left(-\ell|\mathbf{x}_j-\mathbf{x}_k|\right)$. The Mat\'ern kernel function is widely applied in practice; see \cite{muyskens2024identifiability} for example. 

For a moment, let us consider the case $d=1$. Without loss of generality, assume $x_1<x_2< \ldots < x_n$. A non-zero function $\phi$ is termed an $s$ degree KP if it can admit the representation as 
\begin{equation} \label{eq: phix}
\phi(x)=\sum_{j=1}^n {A_j \Phi(x, x_j)}
\end{equation}
with the support of $\phi$ being the interval $[x_1, x_n]$. In \eqref{eq: phix}, $A_j$ is the coefficients of $\Phi(x, x_j)$, which can be obtained by solving the following linear systems:
\begin{equation} \label{eq: Aj}
\sum\limits_{j=1}^s A_j x_j^l\exp(\delta c x_j)=0,
\end{equation}
where $l=0,...,\frac{s-3}{2}$, $\delta=\pm1$, $c^2=\frac{2 \nu}{\ell^2}$ ($\nu$ and $\ell$ are the smoothness parameter and scale parameter of Matérn kernel function respectively).
By \eqref{eq: phix}, we have $\mathbf{A} \mathbf{K} = \phi(\mathbf{x})$, and the inverse of the kernel matrix can be computed by
\begin{align} \label{k-1}
    \mathbf{K}^{-1} = \mathbf{A} \phi(\mathbf{x})^{-1},
\end{align}
where $\mathbf{x}=(x_1, x_2,\ldots, x_n)^\top$ is the input vector, and the $(i,j)^{th}$ entry of $\phi(\mathbf{X})$ and $\mathbf{A}$ are $\phi_j(x_i)$ and the $i$ element of $A_j$, respectively. 

It has been shown in \cite{chen2022kernel} that both $\mathbf{A}$ and $\phi(\mathbf{x})$ are sparse banded matrices, hence the computation of $\mathbf{K}^{-1}$ can be fast. Specifically, \cite{chen2022kernel} showed that computing $\mathbf{A}$ and $\phi(\mathbf{x})$ needs only $O((2\nu+2)^3 n)$ computation time and requires storage space of $O((2\nu+2) n)$, where $\nu$ is usually small in most practical applications. LU decomposition \cite{davis2006direct} can be applied for solving $[\phi(\mathbf{x})]^{-1}$, and \cite{chen2022kernel} showed that the total computation time is only $O(n^2)$.

If $d>1$, i.e., the input $\mathbf{X}$ is multi-dimensional, then special structure of the input points $\mathbf{X}=(\mathbf{x_1}, \mathbf{x_2},..., \mathbf{x_d})$ is needed. KP requires the input points to be tensor-like, that is, the input points can be described as the Cartesian combination of multiple one-dimensional point collections. Specifically, $\mathbf{X}^{FG} = \bigtimes_{i=1}^{N} \mathbf{x}^{(i)}$,
where each \(\mathbf{x}^{(i)}\) represents a distinct one-dimensional point collection. With tensor-like input points, $\mathbf{K}$ can be decomposed into Kronecker products as
\begin{equation} \label{K_Kronecker}
    \mathbf{K} = \bigotimes_{i=1}^{d} \mathbf{K_{x_i}},
\end{equation}
where $\mathbf{K_{x_i}} = (\Phi(x_{ij},x_{ik})), 1 \leq i \leq d, 1 \leq j,k \leq n_i$, $n_i$ is the number of points for dimension $i$ and thus the dimension of $\mathbf{K}$ is $n\times n$, where $n = \prod_{i=1}^{d} n_i$.
Then the inverse $\mathbf{K}$ can be directly computed by
\begin{equation}  \label{K-1_Kronecker}
    \mathbf{K}^{-1} = \left[\bigotimes_{i=1}^{d} \mathbf{K_{x_i}}\right]^{-1} = \bigotimes_{i=1}^{d} \mathbf{K_{x_i}^{-1}},
\end{equation}
where each $\mathbf{K_{x_i}^{-1}}$ can be computed using one-dimensional KP. Putting all things together, we obtain
\begin{equation} \label{mul K -1}
    \mathbf{K}^{-1} = \bigotimes_{i=1}^{d} \mathbf{K_{x_i}^{-1}} = \bigotimes_{i=1}^{d} \mathbf{A_{x_i}} \phi(\mathbf{x_i})^{-1},
\end{equation}
whose computation time is still $O(n^2)$.

\section{KP Accelerated PINNs} \label{s:method}

In this section, we introduce the proposed method, KP accelerated PINNs (KP-PINNs). Instead of using $L_2$ loss in \eqref{eq_L2loss}, which was shown to be unstable in some specific PDE systems, we consider tensor Sobolev norm. The tensor Sobolev space \( H_T^\nu(\Omega) \) \cite{hochmuth2000tensor} is defined as the set of functions \( f : \Omega \to \mathbb{R} \) satisfying
\begin{equation} \label{tensor_sobolev_norm}
H_T^\nu(\Omega) = \left\{ f \in L^2(\Omega, \mathbb{R}) \,\middle|\, \norm{f}_{H_T^\nu(\Omega)} < \infty \right\},
\end{equation}
where \( \norm{f}_{H_T^\nu(\Omega)} \) is the norm on \( H_T^\nu(\Omega) \), explicitly defined via the Fourier transform as
\begin{equation} \label{tensor_sobolev_norm_fourier}
\norm{f}_{H_T^\nu(\Omega)}^2 = \int_{\mathbb{R}^d} \left| \tilde{f}(\omega) \right|^2 \prod_{i=1}^d \left( 1 + |\omega_i|^2 \right)^\nu d\omega,
\end{equation}
where \( \tilde{f}(\omega) \) represents the Fourier transform of \( f \). We define the loss function with tensor Sobolev norm as
\begin{align}\label{eq_sobloss}
    {\rm Loss}_{H_T^\nu} = \norm{ \mathcal{L} u_\text{NN} - f} _{H_T^\nu(\Omega)}^{2} + \lambda \norm{\mathcal{B} u_\text{NN} - g } _{H_T^\nu(\partial \Omega)}^{2},
\end{align}
where $\lambda>0$ is a tuning parameter that balances the weights between the difference between the space and its boundary, and $u_{NN}$ is the predicted value obtained from the neural network. However, directly computing the tensor Sobolev norm may be time-consuming and inaccurate. By the equivalence between the RKHS norm and tensor Sobolev norm, we modify the loss function \eqref{eq_sobloss} to
\begin{align}\label{eq_rkhsloss}
    {\rm Loss}_{\mathcal{N}_{\Phi_T^\nu}} = \norm{\mathcal{L} u_\text{NN} - f} _{\mathcal{N}_{\Phi_T^\nu(\Omega)}}^{2} + \lambda \norm{\mathcal{B} u_{\text{NN}} - g} _{\mathcal{N}_{\Phi_T^\nu(\partial\Omega)}}^{2},
\end{align}
where $\mathcal{N}_{\Phi_T^\nu}$ is tensor RKHS, i.e. the RKHS generated by $\Phi_T^\nu(x) = \prod_{i=1}^d \Phi^\nu(x_i)$. Tensor RKHS and Sobolev space are widely considered for  complexity reduction in high-dimensional spaces \cite{ding2020high,kuhn2015approximation,dung2021deep}.

Computing the loss function in \eqref{eq_rkhsloss} can be efficient via KP. Specifically, let
\begin{equation} \label{eq: h}
h_{1\text{NN}} = \mathcal{L} u_\text{NN} - f, h_{2\text{NN}} = \mathcal{B} u_\text{NN} - g,
\end{equation}
and we consider $\norm{h_{1\text{NN}}} _{\mathcal{N}_{\Phi_T^\nu(\Omega)}}^{2} = \norm{\mathcal{L} u_\text{NN} - f} _{\mathcal{N}_{\Phi_T^\nu(\Omega)}}^{2}$ first. Based on the Theorem 11.23 in \cite{wendland2004scattered}, $\norm{h_{1\text{NN}}} _{\mathcal{N}_{\Phi_T^\nu(\Omega)}}^{2}$ can be approximated by 
\begin{equation} \label{eq: RKHS h}
\mathbf{y}_{h1}^T \mathbf{K}^{-1} \mathbf{y}_{h1},
\end{equation}
where $\mathbf{y}_{h1} = (h_{1\text{NN}}(\mathbf{x^1}), ... , h_{1\text{NN}}(\mathbf{x^n}))$ and ${\mathbf{x^1}, ..., \mathbf{x^n}} \in \mathbb{R}^{d}$ means all the training data. $\mathbf{K}$ is as in \eqref{K_Kronecker}. The other term $\norm{h_{2\text{NN}}} _{\mathcal{N}_{\Phi_T^\nu(\partial\Omega)}}^{2}$ can be approximated in a similar manner.

With \eqref{k-1}, \eqref{mul K -1} and approximation \eqref{eq: RKHS h}, the loss function \eqref{eq_rkhsloss} can be approximated by:
\begin{align} \label{loss KP}
    {\rm Loss}_{\mathcal{N}_{\Phi_T^\nu}} \approx 
    \mathbf{y}_{h1}^T \mathbf{A}_{h1} \phi(\mathbf{x})_{h1}^{-1} \mathbf{y}_{h1} + \lambda \mathbf{y}_{h2}^T \mathbf{A}_{h2} \phi(\mathbf{x})_{h2}^{-1} \mathbf{y}_{h2}.
\end{align}
By KP method introduced in Section \ref{s:prel.4}, ${\rm Loss}_{\mathcal{N}_{\Phi_T^\nu}}$ can be easily computed.Some well-established optimization algorithms, such as Adam or L-BFGS, can be used to minimize the loss effectively. The KP-PINNs algorithm is summarized in Algorithm \ref{alg:multi}. For the inverse problem, comparing \eqref{eq: MSE forward} and \eqref{eq: MSE inverse}, it can be seen that the loss function of the inverse problem has more ${\rm Loss}_{\mathcal{D}}(u)$ part. For this part, the computation of the RKHS norm is similar to the analysis above. In this way, the KP-PINNs algorithm can effectively solve both forward and inverse problems.

\begin{algorithm}[tb]
    \caption{KP-PINNs}
    \label{alg:multi}
    \textbf{Input}: PDE systems \eqref{eq: normalized pde}, iterations $n_\text{iter}$, known points\\
    \textbf{Output}: Approximated solution $\hat{u}(\mathbf{x})$ and equations' parameters (if inverse problem)
    \begin{algorithmic}[1] 
        \STATE Initialize neural network parameters.
        \WHILE{$i < n_\text{iter}$}
        \STATE Forward pass and compute the predicted value.
        \STATE Compute derivatives using automatic differentiation.
        \STATE Compute $\mathbf{A}$ and $\phi(\mathbf{x})$ by \eqref{eq: Aj} and  \eqref{eq: phix}.
        \STATE Compute the loss of KP-PINNs based on \eqref{loss KP}.
        \STATE Update parameters.
        \ENDWHILE
    \end{algorithmic}
\end{algorithm}

\section{Theoretical Results} \label{s:theor}

In this section, we analyze the stability of PDEs solved with the loss function \eqref{eq_sobloss}, or equivalently \eqref{eq_rkhsloss}. The stability of a PDE system is defined as follows \cite{wang20222}.
\begin{definition}[Stability of PDEs \cite{wang20222}]
\label{def:stability}
Suppose $Z_1, Z_2$, and $Z_3$ are three Banach spaces. For the exact solution $u^*(\mathbf{x})$ and the predicted solution $u$, if for $\norm{\mathcal{L}u - f}_{Z_1}, \norm{\mathcal{B}u - g}_{Z_2} \to 0$, it has $\norm{u^* - u}_{Z_3} = O(\norm{\mathcal{L}u - f}_{Z_1} + \norm{\mathcal{B}u - g}_{Z_2})$, then the PDEs \eqref{eq: normalized pde} is $(Z_1, Z_2, Z_3)$-stable. 
\end{definition}
Stability defined in Definition \ref{def:stability} provides a theoretical guarantee for solving a PDE. Specifically, if the loss function is $\norm{\mathcal{L}u - f}_{Z_1} + \norm{\mathcal{B}u - g}_{Z_2}$, then minimizing the loss function can ensure the convergence of the solution under $Z_3$-norm. In this section, we consider two classes of PDEs: second-order linear elliptic equations and Hamilton-Jacobi-Bellman (HJB) equations. We start with the second-order linear elliptic equations satisfying the following assumption. While in the previous section, we considered the parameters of the PDE as constants, here we extend the analysis to a more general case, where the parameters are functions. This allows us to establish the theoretical results in a broader context. 
\begin{assumption}\label{A1}
    Suppose in \eqref{eq: normalized pde}, the operator $\mathcal{L} = \sum_{i,j=1}^d a_{i,j}(\mathbf{x}) \partial_{x_i} \partial_{x_j} + \sum_{i=1}^d b_{i}(\mathbf{x}) \partial_{x_i} + c(x)$ and $\mathcal{B}$ is the identity operator. The functions $a_{i,j}\in C^2$ satisfy the uniformly elliptic condition. In addition, the only solution with zero input data is the zero solution.
\end{assumption}   
\begin{lemma}[Theorem 2.1 in \cite{bramble1970rayleigh}]\label{Lemma elliptic bound}
    In addition to Assumption \eqref{A1}, assume that $\mathcal{L}$ is with $C^\infty$ coefficients, defined on $C^\infty$ bounded domain $\Omega$ on $\mathbb{R}^d$. For any real number $l$, $\norm{u}_{H^l(\Omega)} \leq C\left(\norm{\mathcal{L} u}_{H^{l-2}(\Omega)}+\norm{u}_{H^{l-\frac{1}{2}}(\partial \Omega)}\right)$ for all $u \in C^\infty(\bar{\Omega})$, where $\bar{\Omega}$ is a closure of $\Omega$ and C is independent of $u$.
\end{lemma}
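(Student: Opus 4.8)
The plan is to recognize this as a standard a priori (coercivity) estimate for the Dirichlet problem associated with a uniformly elliptic operator, and to reprove it by combining localized elliptic estimates with a compactness argument that exploits the injectivity hypothesis in Assumption~\ref{A1}. The overall strategy has two stages: first establish a \emph{rough} estimate of the form $\norm{u}_{H^l(\Omega)} \le C(\norm{\mathcal{L}u}_{H^{l-2}(\Omega)} + \norm{u}_{H^{l-\frac{1}{2}}(\partial\Omega)} + \norm{u}_{H^{l-1}(\Omega)})$, in which a lower-order bulk term is permitted on the right-hand side, and then remove that lower-order term. Working throughout on the Fourier side (equivalently, with a pseudodifferential parametrix) has the advantage that real, possibly fractional, values of $l$ are handled uniformly.

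For the rough estimate I would localize via a finite cover of $\bar\Omega$ with a subordinate smooth partition of unity $\{\chi\}$, splitting the cover into \emph{interior} charts (away from $\partial\Omega$) and \emph{boundary} charts (meeting $\partial\Omega$). On each interior chart, uniform ellipticity of $(a_{i,j})$ gives a principal symbol bounded below by $c|\xi|^2$; constructing an approximate inverse yields $\norm{\chi u}_{H^l} \le C(\norm{\mathcal{L}u}_{H^{l-2}} + \norm{u}_{H^{l-1}})$, where the $H^{l-1}$ term absorbs commutator and lower-order contributions. On each boundary chart I would use the $C^\infty$ regularity of $\partial\Omega$ to flatten the boundary by a diffeomorphism onto a half-space, transporting $\mathcal{L}$ to another uniformly elliptic operator and $\mathcal{B}$ to the Dirichlet trace; a tangential Fourier transform then reduces matters to a family of second-order ODEs in the normal variable, whose solvability is governed by the Lopatinskii--Shapiro (complementing) condition. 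Since $\mathcal{B}$ is the identity, this condition holds automatically for a second-order elliptic operator, giving $\norm{\chi u}_{H^l} \le C(\norm{\mathcal{L}u}_{H^{l-2}} + \norm{u}_{H^{l-\frac{1}{2}}(\partial\Omega)} + \norm{u}_{H^{l-1}})$. Summing over the partition of unity produces the rough global estimate.

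The final stage removes the term $\norm{u}_{H^{l-1}(\Omega)}$ by a contradiction-and-compactness argument. If no uniform $C$ existed, there would be a sequence $u_n \in C^\infty(\bar\Omega)$ with $\norm{u_n}_{H^l(\Omega)} = 1$ yet $\norm{\mathcal{L}u_n}_{H^{l-2}(\Omega)} + \norm{u_n}_{H^{l-\frac{1}{2}}(\partial\Omega)} \to 0$. By the Rellich--Kondrachov compact embedding $H^l(\Omega) \hookrightarrow\hookrightarrow H^{l-1}(\Omega)$, a subsequence converges in $H^{l-1}(\Omega)$; applying the rough estimate to the differences $u_n - u_m$ shows the sequence is Cauchy in $H^l(\Omega)$, so $u_n \to u$ in $H^l(\Omega)$ with $\norm{u}_{H^l(\Omega)}=1$. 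Passing to the limit gives $\mathcal{L}u = 0$ in $\Omega$ and $u = 0$ on $\partial\Omega$; by elliptic regularity $u$ is smooth, so the zero-solution hypothesis in Assumption~\ref{A1} forces $u \equiv 0$, contradicting $\norm{u}_{H^l(\Omega)} = 1$.

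I expect the boundary-chart analysis to be the main obstacle: verifying the complementing condition and carrying out the normal-direction ODE estimates uniformly in the tangential frequency, while controlling the errors introduced by flattening a curved $C^\infty$ boundary and by commuting the cutoffs $\chi$ past $\mathcal{L}$. The interior estimate and the concluding compactness argument are comparatively routine, but the patching must be arranged so that the constant $C$ remains independent of $u$ across all charts.
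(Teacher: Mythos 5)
The paper offers no proof of this lemma at all: it is imported verbatim, by citation, as Theorem~2.1 of \cite{bramble1970rayleigh}, so there is no internal argument to compare yours against. Your outline is a faithful reconstruction of the standard Agmon--Douglis--Nirenberg/Lions--Magenes proof of such a priori estimates (interior parametrix, boundary flattening, Lopatinskii--Shapiro for the Dirichlet trace, then removal of the lower-order bulk term by Rellich compactness together with the uniqueness hypothesis of Assumption~\ref{A1}), and that skeleton is correct as far as it goes.

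There is, however, one genuine gap: your localization argument only delivers the estimate for $l$ sufficiently large (essentially $l \ge 2$), whereas the lemma claims it for \emph{every} real $l$, and the paper actually invokes it at $l=1$, i.e.\ with $\norm{\mathcal{L}u}_{H^{-1}(\Omega)}$ on the right-hand side. In the low and negative range the ingredients you rely on break down: the tangential-Fourier reduction to a normal-variable ODE presupposes that $\mathcal{L}u$ is a function (not a distribution of negative order), the commutator terms $[\mathcal{L},\chi]u$ are no longer dominated by $\norm{u}_{H^{l-1}}$ in a useful way when $l-2<0$, and the trace pairing $u\mapsto u|_{\partial\Omega}$ ceases to be controlled by the interior Sobolev norm once $l\le \tfrac12$. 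The standard repair --- and the route Bramble--Schatz themselves take --- is transposition: one first proves existence, uniqueness, and the high-order estimate for the formal adjoint problem (which your argument does supply), and then obtains the negative- and low-order estimates for $\mathcal{L}$ by duality, interpolating to cover fractional $l$. Without this duality step your proof establishes a strictly weaker statement than the lemma, and in particular does not cover the instance of the lemma that Theorem~\ref{thm:elliptic_RKHS} actually uses.
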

\begin{theorem}[Stability of second-order linear elliptic equation]\label{thm:elliptic_RKHS}
Suppose the RKHS $\mathcal{N}_{\Phi_T^1}$ coincides with the space $H^1_T$. For any $C^{\infty}$ bounded domain $\Omega \subseteq \mathbb{R}^d$, if Assumption \ref{A1} is satisfied and all the coefficient functions are in $C^\infty$, then the equation 
    \begin{equation*}
        \left\{\begin{array}{ll}
        \mathcal{L}_{\theta} u(\mathbf{x}) = h_{\theta}(\mathbf{x}), & \mathbf{x} \in \Omega \subset \mathbb{R}^{d}, \\
        \mathcal{B}_{\theta} u(\mathbf{x}) = g_{\theta}(\mathbf{x}), & \mathbf{x} \in \partial \Omega
        \end{array}\right.
    \end{equation*}
    is $(\mathcal{N}_{\Phi^1_T} (\Omega),\mathcal{N}_{\Phi^1_T} (\partial \Omega), H^1(\Omega))$-stable.
\end{theorem}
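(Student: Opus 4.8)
The plan is to reduce the claimed stability to the a priori elliptic estimate of Lemma \ref{Lemma elliptic bound} applied to the error, and then to absorb the residual norms appearing there into the RKHS norms via the identification $\mathcal{N}_{\Phi_T^1} = H^1_T$ and a comparison of Fourier multipliers. First I would set $e = u^* - u$, where $u^*$ is the exact solution and $u$ the predicted (neural network) solution. Since $\mathcal{L}u^* = h_\theta$ and $\mathcal{B}$ is the identity operator with $\mathcal{B}u^* = u^*|_{\partial\Omega} = g_\theta$, linearity of $\mathcal{L}$ gives $\mathcal{L}e = -(\mathcal{L}u - h_\theta)$ and $e|_{\partial\Omega} = -(\mathcal{B}u - g_\theta)$. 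Because the network has smooth activations and the coefficients and data are $C^\infty$, elliptic regularity yields $u^*, u \in C^\infty(\bar{\Omega})$, hence $e \in C^\infty(\bar{\Omega})$, so Lemma \ref{Lemma elliptic bound} is applicable (its constant $C$ being finite relies on the trivial-kernel condition in Assumption \ref{A1}).

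Next I would invoke Lemma \ref{Lemma elliptic bound} with $l=1$ to obtain
\[
\norm{e}_{H^1(\Omega)} \le C\big(\norm{\mathcal{L}e}_{H^{-1}(\Omega)} + \norm{e}_{H^{1/2}(\partial\Omega)}\big) = C\big(\norm{\mathcal{L}u - h_\theta}_{H^{-1}(\Omega)} + \norm{\mathcal{B}u - g_\theta}_{H^{1/2}(\partial\Omega)}\big),
\]
so it remains to dominate the $H^{-1}(\Omega)$ and $H^{1/2}(\partial\Omega)$ norms by the tensor Sobolev norms on $\Omega$ and $\partial\Omega$. The interior bound is immediate from the symbol inequality $(1+|\omega|^2)^{-1} \le 1 \le \prod_{i=1}^d (1+|\omega_i|^2)$, which upon integration against $|\tilde{f}(\omega)|^2$ in \eqref{tensor_sobolev_norm_fourier} gives $\norm{\cdot}_{H^{-1}(\Omega)} \le \norm{\cdot}_{H^1_T(\Omega)} = \norm{\cdot}_{\mathcal{N}_{\Phi_T^1}(\Omega)}$ (on the bounded domain after a standard extension). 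More generally the same expansion shows $\prod_{i=1}^d(1+|\omega_i|^2) \ge 1+|\omega|^2$, so that $H^1_T$ is sandwiched between the isotropic spaces $H^1$ and $H^d$, which is all that the interior step needs.

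For the boundary term I would argue in local charts: in each chart the analogous comparison $(1+|\omega|^2)^{1/2} \le 1+|\omega|^2 \le \prod_{i=1}^{d-1}(1+|\omega_i|^2)$ yields $\norm{\cdot}_{H^{1/2}} \le \norm{\cdot}_{H^1_T}$ on $\mathbb{R}^{d-1}$, and a partition of unity subordinate to the charts transfers this to $\norm{\mathcal{B}u - g_\theta}_{H^{1/2}(\partial\Omega)} \le C'\norm{\mathcal{B}u - g_\theta}_{H^1_T(\partial\Omega)}$ with a constant depending only on $\Omega$. Combining the three displays gives $\norm{u^* - u}_{H^1(\Omega)} = O\big(\norm{\mathcal{L}u - h_\theta}_{\mathcal{N}_{\Phi_T^1}(\Omega)} + \norm{\mathcal{B}u - g_\theta}_{\mathcal{N}_{\Phi_T^1}(\partial\Omega)}\big)$, which is exactly $(\mathcal{N}_{\Phi_T^1}(\Omega),\mathcal{N}_{\Phi_T^1}(\partial\Omega),H^1(\Omega))$-stability. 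I expect the main obstacle to be this boundary step: defining $H^1_T(\partial\Omega)$ intrinsically on the curved manifold and making the chart-to-manifold embedding rigorous, since the Fourier definition \eqref{tensor_sobolev_norm_fourier} lives on $\mathbb{R}^d$ and the tensor-product structure is not preserved under general changes of coordinates. By contrast, the interior comparison and the application of Lemma \ref{Lemma elliptic bound} are routine.
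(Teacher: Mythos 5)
Your proposal is correct and follows essentially the same route as the paper: both apply Lemma \ref{Lemma elliptic bound} with $l=1$ and then chain norm inequalities ($H^{-1}(\Omega)\le H^1_T(\Omega)$, $H^{1/2}(\partial\Omega)\le H^1_T(\partial\Omega)$) before invoking the identification $\mathcal{N}_{\Phi_T^1}=H^1_T$. You are in fact more careful than the paper in two respects — you apply the a priori estimate to the error $e=u^*-u$ explicitly (the paper states it for $u$ and leaves the linearity step implicit), and you flag the genuine subtlety of defining the tensor Sobolev norm on the curved boundary $\partial\Omega$, which the paper passes over silently.
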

\begin{proof}
Sets $l=1$, by Lemma \ref{Lemma elliptic bound}, we have 
\begin{align*}
    \norm{u}_{H^1(\Omega)} &\leq C_1 (\norm{\mathcal{L} u }_{H^{-1}(\Omega)} + \norm{u}_{H^{\frac{1}{2}}(\partial \Omega)}) \\
    &\leq C_2( \norm{\mathcal{L} u }_{L^{2}(\Omega)} + \norm{\mathcal{B} u }_{H^{1}(\partial \Omega)}) \\
    &\leq C_3( \norm{\mathcal{L} u }_{H^1(\Omega)} + \norm{\mathcal{B} u }_{H^{1}(\partial \Omega)}) \\
    & \leq C_4 ( \norm{\mathcal{L} u }_{H^1_T(\Omega)} + \norm{\mathcal{B} u }_{H^{1}_T(\partial \Omega)}).
\end{align*}
Since the RKHS $\mathcal{N}_{\Phi_T^1}$ coincides with the space $H^1_T$, we have that $\norm{u}_{H^1(\Omega)} \leq C_4( \norm{\mathcal{L} u }_{\mathcal{N}_{\Phi_T^1}(\Omega)} + \norm{\mathcal{B} u }_{\mathcal{N}_{\Phi_T^1}(\partial \Omega)})$.
\end{proof}
The proof of Theorem \ref{thm:elliptic_RKHS} is a natural result derived from the classical theory of second-order linear elliptic equations \cite{shin2023error}.  
It shows that the numerical solution for the second-order linear elliptic equation is stable.
Next, we show the stability of the HJB equation \eqref{eq: reformulated HJB}.
\begin{lemma}[Stability of HJB equation - $L_\infty$ norm \cite{wang20222}]
\label{thm:hjb_Lp}
For $p,q \geq 1$, let $r_0 = \frac{(d+2)q}{d+q}$. Suppose the subsequent inequalities are valid for $p,q$ and $r_0$:
\begin{equation*}
\begin{aligned}
    p \geq \max\left\{2,\left(1-\frac1{\check{c}}\right)d\right\},
    q > \frac{(\check{c}-1)d^2}{(2-\check{c})d+2},
    \frac1{r_0} \geq \frac1p-\frac1d,
\end{aligned}
\end{equation*}
where $\check{c}=\max_{1\leq i\leq d}c_i$ in \eqref{eq: reformulated HJB}. For any $r\in[1,r_0)$, any bounded open set $\Omega \subset \mathbb{R}^d\times[0,T]$, and for a large $R$ such that $\Omega \subset B_R \times [0,T]$, where $B_R$ is a ball with radius $R$, \eqref{eq: reformulated HJB} maintains $(L_p(\Omega),L_q(B_R),W^{1,r}(\Omega))$-stability when $\check{c} \leq 2$.
\end{lemma}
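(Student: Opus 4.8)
The plan is to follow the standard stability argument for semilinear parabolic equations: reduce the nonlinear problem to a linear one for the error, invoke $L_p$ parabolic regularity for the linear part, and then close the estimate via Sobolev embedding and absorption of the nonlinear term. First I would introduce the error $w = u^* - u$, where $u^*$ is the exact solution, and denote the interior and terminal residuals by $R_1 = \tilde{\mathcal{L}}_\text{HJB} u - h$ and $R_2 = u(\cdot, T) - g$, whose norms $\norm{R_1}_{L_p(\Omega)}$ and $\norm{R_2}_{L_q(B_R)}$ are assumed to tend to zero by the hypothesis of Definition \ref{def:stability}. The goal is to bound $\norm{w}_{W^{1,r}(\Omega)}$ by a constant multiple of $\norm{R_1}_{L_p(\Omega)} + \norm{R_2}_{L_q(B_R)}$ for every $r \in [1, r_0)$.

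Next I would subtract the exact equation $\tilde{\mathcal{L}}_\text{HJB} u^* = h$ from the residual equation. Writing $\tilde{\mathcal{L}}_\text{HJB} = \mathcal{L}_0 + \sum_{i=1}^d A_i |\partial_i \cdot|^{c_i}$ with the linear part $\mathcal{L}_0 u = \partial_t u - \Delta u$, the difference produces a linear parabolic equation for the error of the form $\mathcal{L}_0 w = F$ with terminal data $w(\cdot, T) = -R_2$, where the forcing is $F = -R_1 - \sum_{i=1}^d A_i \big( |\partial_i u^*|^{c_i} - |\partial_i u|^{c_i} \big)$. The second group of terms is the nonlinear contribution, and the elementary inequality $\big| |a|^{c} - |b|^{c} \big| \le c\,(|a| + |b|)^{c-1} |a - b|$ lets me bound it pointwise by $(|\partial_i u^*| + |\partial_i u|)^{c_i - 1} |\partial_i w|$, i.e.\ by a power of the gradients times the gradient of the error $w$.

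I would then apply maximal $L_p$ regularity for the heat operator $\mathcal{L}_0$ to obtain a bound on $w$ in the anisotropic parabolic Sobolev space $W^{2,1}_p(\Omega)$ in terms of $\norm{F}_{L_p(\Omega)}$ and a trace norm of $R_2$, and use the embedding $W^{2,1}_p \hookrightarrow W^{1,r}$, which holds precisely under the stated restriction $\tfrac{1}{r_0} \ge \tfrac1p - \tfrac1d$, to pass to the $W^{1,r}(\Omega)$ norm of $w$. The remaining task is to estimate $\norm{F}_{L_p}$: applying Hölder's inequality to the nonlinear term together with the embedding of $W^{1,r}$ into an appropriate $L^s$ space shows that $\norm{(|\partial_i u^*| + |\partial_i u|)^{c_i-1} |\partial_i w|}_{L_p}$ is controlled by $\norm{w}_{W^{1,r}(\Omega)}$ times a finite factor built from the gradients of $u^*$ and $u$. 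The conditions $p \ge \max\{2,(1 - 1/\check c)d\}$, $q > (\check c - 1)d^2/((2-\check c)d + 2)$, and the subcriticality restriction $\check c \le 2$ are exactly what guarantee that the exponent $c_i - 1$ stays admissible, that the Hölder pairing is summable, and that the terminal datum measured in $L_q(B_R)$ dominates the parabolic trace term.

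The main obstacle I expect is closing the a priori estimate. The nonlinear forcing produces a term proportional to $\norm{w}_{W^{1,r}(\Omega)}$ on the right-hand side, and one must show that this factor can be absorbed into the left-hand side. This requires either a smallness argument, valid once $\norm{R_1}_{L_p(\Omega)}$ and $\norm{R_2}_{L_q(B_R)}$ are sufficiently small (which is guaranteed in the limit taken by Definition \ref{def:stability}), or a continuation argument in the regularity exponent. Verifying that the chosen exponents $p, q, r_0$ simultaneously satisfy the Sobolev embedding, the Hölder pairing for the nonlinearity, and the trace estimate --- all compatibly under $\check c \le 2$ --- is the technically delicate heart of the argument, and it coincides with the derivation established in \cite{wang20222}.
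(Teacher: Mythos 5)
First, note that the paper does not prove this statement at all: it is imported verbatim as a lemma, with the attribution in its bracketed title pointing to \cite{wang20222}, and it is then used as a black box in the proof of Theorem \ref{thm:hjb_RKHS}. So there is no in-paper proof to compare your attempt against; the only fair comparison is with the argument in the cited reference, whose overall architecture (subtract the exact equation, linearize the nonlinearity $|\partial_i u|^{c_i}-|\partial_i u^*|^{c_i}$ via a mean-value or elementary inequality, treat the error equation as a linear parabolic equation with a rough drift, apply parabolic regularity plus embeddings, and close by absorption) your sketch does reproduce at a high level.

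That said, as a proof your proposal has genuine gaps beyond being a plan. The most serious one is in the H\"older/absorption step: to bound $\norm{(|\partial_i u^*|+|\partial_i u|)^{c_i-1}\,|\partial_i w|}_{L_p}$ you need a priori integrability of $\nabla u$ for the \emph{candidate} solution $u$ in some Lebesgue class, but the hypothesis of Definition \ref{def:stability} only gives smallness of the residuals $\norm{\mathcal{L}u-f}_{Z_1}$ and $\norm{\mathcal{B}u-g}_{Z_2}$; it gives no bound on $\nabla u$ itself. Without such a bound the ``finite factor built from the gradients of $u^*$ and $u$'' may be infinite and the absorption argument cannot start; the cited reference handles this by restricting the admissible class of $u$ and by exploiting $\check c\le 2$ so that the effective drift grows at most linearly in $|\nabla u|$, and any complete proof must make that restriction explicit. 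A second issue is the exponent bookkeeping: you attribute the condition $\frac{1}{r_0}\ge\frac1p-\frac1d$ to an embedding $W^{2,1}_p\hookrightarrow W^{1,r}$, but $r_0=\frac{(d+2)q}{d+q}$ depends only on $q$ and the parabolic dimension $d+2$ and arises from how the terminal datum, measured in $L_q(B_R)$ on the time slice $t=T$, propagates into the interior estimate; the displayed inequality is a compatibility condition between $p$ and $q$, not the embedding exponent of the forcing term. Your sketch never actually engages with this trace/duality step, and deferring ``the technically delicate heart of the argument'' to \cite{wang20222} means the proposal, as written, does not constitute an independent proof of the lemma.
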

\begin{theorem}[Stability of HJB equation]
\label{thm:hjb_RKHS}
For $p,q \geq 1$, let $r_0 = \frac{(d+2)q}{d+q}$. Suppose the subsequent inequalities are valid for $p,q$ and $r_0$:
\begin{equation*}
p\geq\max\left\{2,\left(1-\frac1{\check{c}}\right)d\right\}, 
q>\frac{(\check{c}-1)d^2}{(2-\check{c})d+2}, 
\frac1{r_0}\geq\frac1p-\frac1d,
\end{equation*}
where $\check{c}=\max_{1\leq i\leq d}c_i$ in \eqref{eq: reformulated HJB}. Then, for any $r\in[1,r_0)$, $\nu_1,\nu_2 \geq 1$ and any bounded open set $\Omega\subset\mathbb{R}^d\times[0,T]$, \eqref{eq: reformulated HJB} maintains $(\mathcal{N}_{\Phi^{\nu_1}_T}(\mathbb{R}^d\times[0,T]),\mathcal{N}_{\Phi^{\nu_2}_T}(\mathbb{R}^d),W^{1,r}(\Omega))$-stable when $\check{c} \leq 2$.
\end{theorem}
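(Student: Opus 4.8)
The plan is to deduce the RKHS stability from the already-established $L_p/L_q$ stability of Lemma \ref{thm:hjb_Lp} by dominating the $L_p(\Omega)$ and $L_q(B_R)$ norms of the residuals by the corresponding tensor RKHS norms. Writing $h_1 = \mathcal{L}_{\text{HJB}} u - h$ and $h_2 = \mathcal{B}_{\text{HJB}} u - g$, it suffices to establish the two embedding inequalities
$$\norm{h_1}_{L_p(\Omega)} \le C \norm{h_1}_{\mathcal{N}_{\Phi^{\nu_1}_T}(\mathbb{R}^d\times[0,T])}, \qquad \norm{h_2}_{L_q(B_R)} \le C \norm{h_2}_{\mathcal{N}_{\Phi^{\nu_2}_T}(\mathbb{R}^d)},$$
with $C$ independent of $u$. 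Given these, the hypotheses $\norm{h_1}_{\mathcal{N}_{\Phi^{\nu_1}_T}}, \norm{h_2}_{\mathcal{N}_{\Phi^{\nu_2}_T}} \to 0$ force $\norm{h_1}_{L_p(\Omega)}, \norm{h_2}_{L_q(B_R)} \to 0$, so Lemma \ref{thm:hjb_Lp} applies and yields $\norm{u^*-u}_{W^{1,r}(\Omega)} = O(\norm{h_1}_{L_p(\Omega)} + \norm{h_2}_{L_q(B_R)}) = O(\norm{h_1}_{\mathcal{N}_{\Phi^{\nu_1}_T}} + \norm{h_2}_{\mathcal{N}_{\Phi^{\nu_2}_T}})$, which is exactly the claimed $(\mathcal{N}_{\Phi^{\nu_1}_T}(\mathbb{R}^d\times[0,T]), \mathcal{N}_{\Phi^{\nu_2}_T}(\mathbb{R}^d), W^{1,r}(\Omega))$-stability.

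Both inequalities will follow from a single Sobolev embedding that exploits the tensor (dominating mixed smoothness) structure. First I use that $\mathcal{N}_{\Phi^\nu_T}$ coincides, up to norm equivalence, with the tensor Sobolev space $H^\nu_T$ of \eqref{tensor_sobolev_norm_fourier}. Then, for a function $w$ on $\mathbb{R}^D$ (with $D=d+1$ in the first case and $D=d$ in the second), Cauchy--Schwarz on the Fourier side gives $|w(x)| \le \int |\tilde w(\omega)| \, d\omega \le \norm{w}_{H^\nu_T}\left(\int_{\mathbb{R}^D} \prod_{i=1}^D (1+|\omega_i|^2)^{-\nu} \, d\omega\right)^{1/2}$, and the last integral factorizes into one-dimensional integrals $\int_{\mathbb{R}}(1+t^2)^{-\nu}\,dt$, each finite precisely when $\nu > 1/2$. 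Since $\nu_1,\nu_2 \ge 1 > 1/2$, this shows $H^{\nu_j}_T \hookrightarrow L_\infty$ with a dimension-robust constant, hence $\norm{w}_{L_\infty} \le C \norm{w}_{\mathcal{N}_{\Phi^{\nu_j}_T}}$. Restricting to the bounded sets $\Omega$ and $B_R$ and using $\norm{w}_{L_p(\Omega)} \le |\Omega|^{1/p}\norm{w}_{L_\infty(\Omega)}$ (and the analogue on $B_R$) produces the two displayed inequalities for any finite $p,q$.

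The main obstacle, and the reason $\nu \ge 1$ is the natural threshold, is precisely this mixed-smoothness embedding: an isotropic bound $H^\nu_T \hookrightarrow H^\nu$ would only deliver $L_\infty$ control when $\nu > D/2$, which fails for $\nu=1$ once the dimension is large, whereas the product weight $\prod_i (1+|\omega_i|^2)^\nu$ turns the separating integral into a product of convergent one-dimensional integrals and keeps the embedding valid uniformly in $D$. The remaining work is bookkeeping: checking that the residuals $h_1,h_2$ lie in the full-space tensor RKHS so the right-hand norms are well defined, that the equivalence $\mathcal{N}_{\Phi^\nu_T} \cong H^\nu_T$ holds on both $\mathbb{R}^d\times[0,T]$ and $\mathbb{R}^d$, and that the admissibility conditions on $p,q,r_0,r$ and $\check{c} \le 2$ demanded by Lemma \ref{thm:hjb_Lp} are inherited unchanged, which they are because we only strengthen the residual norms while leaving the target norm $W^{1,r}(\Omega)$ untouched.
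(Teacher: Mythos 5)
Your proposal follows the same overall route as the paper: both reduce the claim to Lemma \ref{thm:hjb_Lp} by dominating the $L_p(\Omega)$ and $L_q(B_R)$ norms of the residuals by the tensor RKHS norms, then invoke the $(L_p,L_q,W^{1,r})$-stability. The difference lies in how the key embedding into $L_\infty$ is obtained, and here your version is actually the more careful one. The paper's chain passes through the isotropic Sobolev norm, asserting $\norm{f}_{L_\infty(\Omega)} \le c_3 \norm{f}_{H^1(\Omega)} \le c_4 \norm{f}_{H^1_T(\Omega)}$; but the first of these inequalities is the isotropic embedding $H^1 \hookrightarrow L_\infty$, which holds only when $1 > D/2$, i.e.\ in one ambient dimension, and fails on $\mathbb{R}^d\times[0,T]$ for $d \ge 2$ --- precisely the regime the theorem is meant to cover. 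Your argument skips the isotropic intermediary and goes directly from the mixed-smoothness space to $L_\infty$ via Cauchy--Schwarz on the Fourier side, where the product weight $\prod_i(1+|\omega_i|^2)^{-\nu}$ factorizes into one-dimensional integrals each convergent for $\nu > 1/2$; this gives $H^\nu_T \hookrightarrow L_\infty$ uniformly in dimension and is exactly the observation needed to make the paper's chain sound. The remaining bookkeeping you flag (extension from $\Omega$ and $B_R$ to the full space so the Fourier-side argument applies, and the norm equivalence $\mathcal{N}_{\Phi^\nu_T}\cong H^\nu_T$ on the relevant domains) is standard and matches the assumptions the paper itself relies on. In short: same skeleton, but your execution of the embedding step repairs a genuine defect in the paper's displayed inequality \eqref{eq: Norm inequality}.
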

\begin{proof}
First of all, according to the Sobolev embedding theorem \cite{adams2003sobolev,wendland2004scattered,ding2019bdrygp}, when bounded set $\Omega \subset \mathbb{R}^d$, $p<\infty$, $\forall f \in H^1$, $\exists c_1,c_2,c_3,c_4,c_5>0$,  it has the norm inequality
\begin{equation} \label{eq: Norm inequality}
\begin{split}
    c_1 \norm{f}_{L_p(\Omega)} &\le 
c_2 \norm{f}_{L_\infty(\Omega)} \le
c_3 \norm{f}_{H^1(\Omega)}\\
&\le
c_4 \norm{f}_{H_T^1(\Omega)} \le
c_5 \norm{f}_{\Phi_T^1(\Omega)}.
\end{split}
\end{equation}
If $\norm{\mathcal{L}_\text{HJB}u(\mathbf{x}) - \varphi(\mathbf{x})}_{H^{\nu_1}}, \norm{\mathcal{B}_\text{HJB}u(\mathbf{x}) - g(\mathbf{x})}_{H^{\nu_2}} \to 0$, known by \eqref{eq: Norm inequality}, $\norm{\mathcal{L}_\text{HJB}u(\mathbf{x}) - \varphi(\mathbf{x})}_{L_p}, \norm{\mathcal{B}_\text{HJB}u(\mathbf{x}) - g(\mathbf{x})}_{L_q} \to 0$. By Lemma \ref{thm:hjb_Lp}, \eqref{eq: reformulated HJB} is stable. It can be represented as $(L_p(\Omega),L_q(B_R),W^{1,r}(\omega))$-stable, and thus $(H^{\nu_1}(\mathbb{R}^d\times[0,T]), H^{\nu_2}(\mathbb{R}^d), W^{1,r}(\Omega))$-stable. Additionally, the equation is also $(\mathcal{N}_{\Phi_T^1}(\mathbb{R}^d\times[0,T]), \mathcal{N}_{\Phi_T^1}(\mathbb{R}^d), W^{1,r}(\Omega))$-stable, thus $(\mathcal{N}_{\Phi_T^{\nu_1}}(\mathbb{R}^d\times[0,T]), \mathcal{N}_{\Phi_T^{\nu_2}}(\mathbb{R}^d), W^{1,r}(\Omega))$-stable.
\end{proof} 
Theorem \ref{thm:hjb_RKHS} is a modified version of Theorem 4.3 in \cite{wang20222}.
It demonstrates that when the dimension of the state function $d$ is large, selecting an appropriate RKHS ensures the stability of the equations.

\section{Experiments} \label{s:exper}
In this section, we apply KP-PINNs to four representative differential equations: an ODE - Stiff equation, a second-order linear PDE - Helmholtz equation, a HJB equation - LQG equation, and a nonlinear PDE - NS equation. For each differential equation, both the forward and inverse problems are addressed. We compare the KP-PINNs algorithm with several baseline PINNs approaches including $L_2$-PINNs, RKHS-PINNs, and Sobolev-PINNs, as summarized in Table \ref{comparing algorithms}. The implementation details and source code are available at: \url{https://github.com/SiyuanYang-sy/KP-PINNs}.

We consider the accuracy and computational time, where the accuracy is measured by the relative $L_2$ error defined by
\begin{align}
\norm{e}_{\text{relative $L_2$}} = \left( \frac{ \sum_{i=1}^{N_{\text{test}}} |\hat{u}(\mathbf{x}_i) - u(\mathbf{x}_i)|^2 }{ \sum_{i=1}^{N_{\text{test}}} |u(\mathbf{x}_i)|^2 } \right)^{\frac{1}{2}},
\end{align}
where $u(\mathbf{x}_i)$ represents the true solution at $\mathbf{x}_i$, $\hat{u}(\mathbf{x}_i)$ denotes the prediction obtained from the algorithm, and $N_{\text{test}}$ is the size of the test set.
Also, the standard error (SE) is given to reflect the variability across multiple experiments.

\begin{table} [b] 
\renewcommand{\arraystretch}{1.15}
\centering
\begin{tabular}{lc}
\toprule
\multicolumn{1}{c}{Algorithm}               & Loss function  \\ 
\midrule
KP-PINNs ($\mathcal{N}_{\Phi_T^{\nu}}$ norm)   & \eqref{loss KP}, $\nu$ equals to $\frac{1}{2}, \frac{3}{2}, \frac{5}{2}$    \\
$L_2$-PINNs ($L_2$ norm)                          & $\frac{1}{n} \sum_{i=1}^{n} (\mathbf{y}_i - \mathbf{\hat{y}}_i)^2$   \\
RKHS-PINNs ($\mathcal{N}_{\Phi_T^{\nu}}$ norm)  & \eqref{eq: RKHS h}, $\nu$ equals to $\frac{1}{2}, \frac{3}{2}, \frac{5}{2}$     \\
Sobolev-PINNs ($H_T^\nu$ norm)                  & \eqref{tensor_sobolev_norm_fourier}, $\nu$ equals to $1, 2, 3$      \\
\bottomrule
\end{tabular}
\caption{KP-PINNs and the comparing algorithms.}
\label{comparing algorithms}
\end{table}

\subsection{Stiff Equation} \label{s:exper.2}
The Stiff equation \cite{wanner1996solving} exhibits steep gradients caused by fast dynamics, and is commonly treated with time-scale decomposition to reduce numerical and reaction-related errors.
It is defined by
\\
\begin{equation}\label{eq_stiff_eq}
u'(t) - \lambda u(t) = e^{-t}, t \in [0, 5], u(0) = \mu,
\end{equation}
where $\lambda$ and $\mu$ are two parameters. Here we set $\lambda = -2.0$ and $\mu = 2.0$. The analytic solution to \eqref{eq_stiff_eq} is
\begin{equation}
u(t)=\left(\mu+\frac1{1+\lambda}\right)e^{\lambda t}-\frac{e^{-t}}{1+\lambda}.
\end{equation}

\begin{table} [b] 
\renewcommand{\arraystretch}{1.15}
\centering
\resizebox{0.48\textwidth}{!}{%
\begin{tabular}{lcccccc}
\toprule
\multicolumn{1}{c}{Algorithm (-PINNs)} & Forward & SE & Inverse & SE & Inv $\lambda ($-2.0$)$ \\
\midrule
KP ($\nu=\frac{1}{2}$)          & 2.23e-04 & 4.90e-05 & 3.95e-04 & 1.37e-04 & -1.99983 \\
KP ($\nu=\frac{3}{2}$)          & 5.57e-04 & 3.83e-04 & 5.62e-04 & 2.79e-04 & -2.00015 \\
KP ($\nu=\frac{5}{2}$)          & 1.55e-04 & 6.64e-05 & 3.25e-04 & 1.43e-04 & -1.99991 \\
$L_2$                           & 3.31e-04 & \textbf{4.13e-05} & 3.32e-04 & \textbf{1.34e-04} & \textbf{-1.99995} \\
RKHS ($\nu=\frac{1}{2}$)        & 2.90e-04 & 8.19e-05 & 3.93e-04 & 1.35e-04 & -1.99983 \\
RKHS ($\nu=\frac{3}{2}$)        & 5.83e-04 & 3.70e-04 & 3.75e-04 & 3.25e-04 & -1.99984 \\
RKHS ($\nu=\frac{5}{2}$)        & \textbf{1.29e-04} & 5.09e-05 & \textbf{3.24e-04} & 1.41e-04 & -1.99991 \\
Sobolev ($\nu=1$)               & 5.62e-02 & 4.57e-02 & 4.38e-01 & 8.75e-03 & -1.05663 \\
Sobolev ($\nu=2$)               & 3.58e-01 & 1.38e-02 & 4.12e-01 & 1.94e-02 & -1.29371 \\
Sobolev ($\nu=3$)               & 3.56e-01 & 8.57e-03 & 4.29e-01 & 2.60e-02 & -1.64380 \\
\bottomrule
\end{tabular}%
}
\caption{Experimental results of Stiff equation. These six columns represent algorithm types, forward error results, forward SE results, inverse error results, inverse SE results and the parameter to be estimated in the inverse problems. The following tables are also represented in this way.}
\label{exp-stiff}
\end{table}

There is only one initial point so $N_{\mathcal{B}}=1$, and $N_{\mathcal{L}}=50$ in \eqref{eq: MSE forward} in the forward problem. In the inverse problem, we assume that $\lambda$ is unknown and set $N_{\mathcal{B}}+N_{\mathcal{L}}=50$.
In both cases $N_{\text{test}}$ is equal to $2000$.
Table \ref{exp-stiff} provides the average results of three independent runs. KP-PINNs and RKHS-PINNs yield similar results due to the same loss function form, but differ in the computation of $\mathbf{K}^{-1}$, leading to the run-time gap observed in Table \ref{k-time}, where KP-PINNs demonstrate significantly faster computation across all four equations. The Sobolev-PINNs algorithm uses automatic differentiation to compute first- and higher-order derivatives, leading to error accumulation and difficulty in obtaining accurate results. 

\begin{figure} [t]  
    \centering
    \includegraphics[width=\linewidth]{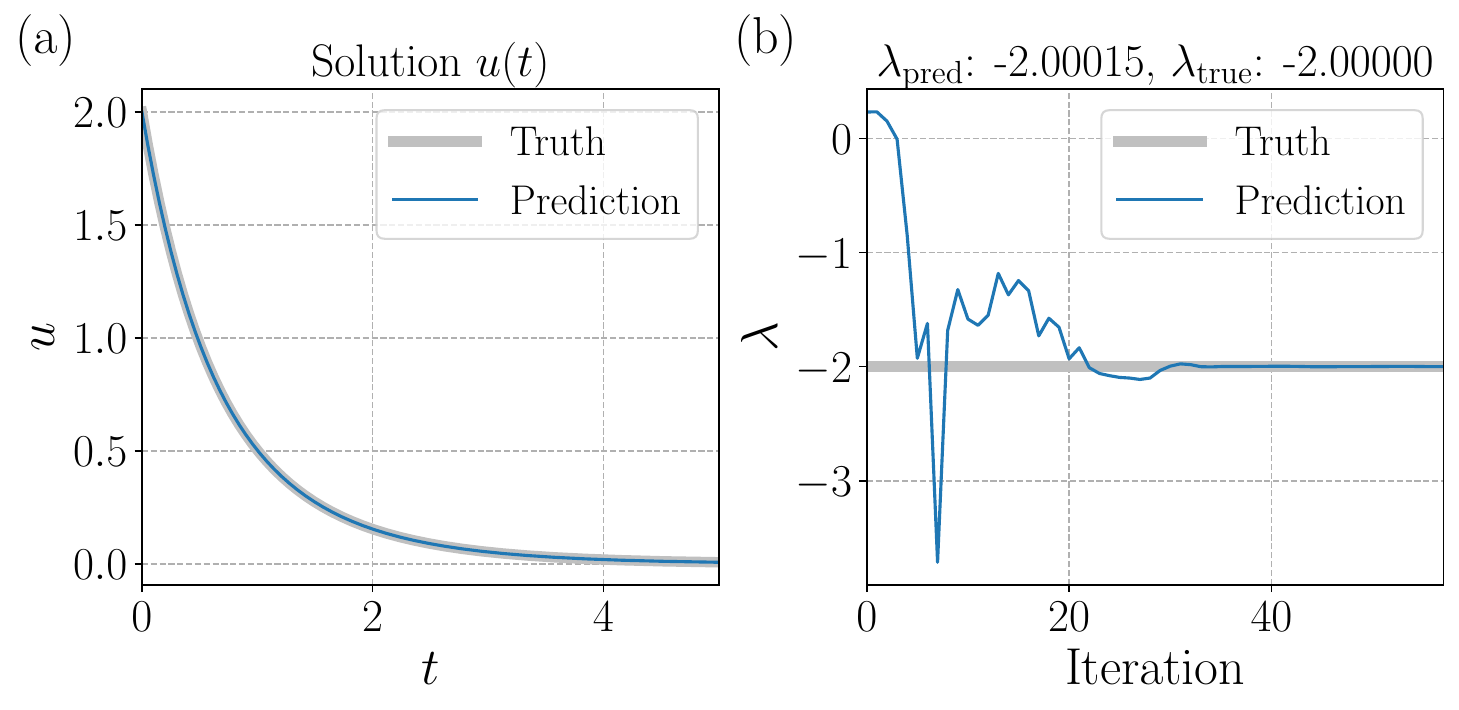}
    \caption{KP-PINNs ($\nu=\frac{3}{2}$) inverse results of Stiff equation. (a) Predicted $u(t)$. (b) Predicted $\lambda$.}
    \label{Stiff-KP32}
\end{figure}


\subsection{Helmholtz Equation} \label{s:exper.3}
The Helmholtz equation is a second-order elliptic PDE commonly encountered in electromagnetism that describes the spatial behavior of electromagnetic wave propagation. It is defined by
\\
\begin{equation} \label{Helmholtz equation}
\begin{cases}
    \Delta u + k^2u = p, (x,y)\in \Omega = [-1,1]^2,    \\ 
    u(x,y)=0, (x,y)\in \partial\Omega,
\end{cases}
\end{equation}
\noindent where $\Delta u = \frac{\partial^2u}{\partial x^2}+\frac{\partial^2u}{\partial y^2}$, $k$ is the wave number, $p$ is the source term (excitation term) \cite{li2022dwnn} defined by $p(x,y) = (x+y) \sin(\pi x) \sin(\pi y) - 2\pi^2 (x+y) \sin(\pi x) \sin(\pi y) + 2\pi \cos(\pi y) \sin(\pi x) + 2\pi \cos(\pi x) \sin(\pi y)$. Here, we set $k=1.0$. The analytic solution for $k=1.0$ \cite{jagtap2020adaptive} is
\begin{equation}
    u(x,y)=(x+y)\sin(\pi x)\sin(\pi y).
\end{equation}

\begin{table} [b] 
\renewcommand{\arraystretch}{1.15}
\centering
\resizebox{0.48\textwidth}{!}{%
\begin{tabular}{lcccccc}
\toprule
\multicolumn{1}{c}{Algorithm (-PINNs)} & Forward & Standard & Inverse & Standard & Inv $k$ ($1.0$) \\
\midrule
KP ($\nu=\frac{1}{2}$)          & 1.13e-03 & 1.97e-04 & 4.88e-03 & 1.10e-03 & 0.98262 \\
KP ($\nu=\frac{3}{2}$)          & \textbf{9.13e-04} & 1.91e-04 & 1.26e-02 & 7.35e-03 & 0.83391 \\
KP ($\nu=\frac{5}{2}$)          & 1.15e-03 & 1.91e-04 & 2.75e-02 & 7.20e-03 & 0.51944 \\
$L_2$                           & 2.99e-03 & 3.06e-04 & 2.36e-01 & 2.05e-01 & 4.37582 \\
RKHS ($\nu=\frac{1}{2}$)        & 2.10e-03 & 9.61e-04 & \textbf{3.37e-03} & \textbf{6.26e-04} & \textbf{0.99927} \\
RKHS ($\nu=\frac{3}{2}$)        & 1.02e-03 & \textbf{1.23e-04} & 1.68e-02 & 7.58e-03 & 0.77236 \\
RKHS ($\nu=\frac{5}{2}$)        & 1.10e-03 & 1.65e-04 & 1.88e-02 & 7.34e-03 & 0.69293 \\
Sobolev ($\nu=1$)               & 1.79e+00 & 1.01e+00 & 7.41e-01 & 4.82e-01 & 0.00475 \\
Sobolev ($\nu=2$)               & 8.95e-01 & 2.09e-02 & 3.85e-01 & 1.64e-02 & 0.00221 \\
Sobolev ($\nu=3$)               & 1.31e+01 & 8.85e+00 & 8.20e-01 & 1.00e-01 & 0.01159 \\
\bottomrule
\end{tabular}%
}
\caption{Experimental results of Helmholtz equation.}
\label{exp-helmholtz}
\end{table}




\begin{figure}
    \centering
    \includegraphics[width=\linewidth]{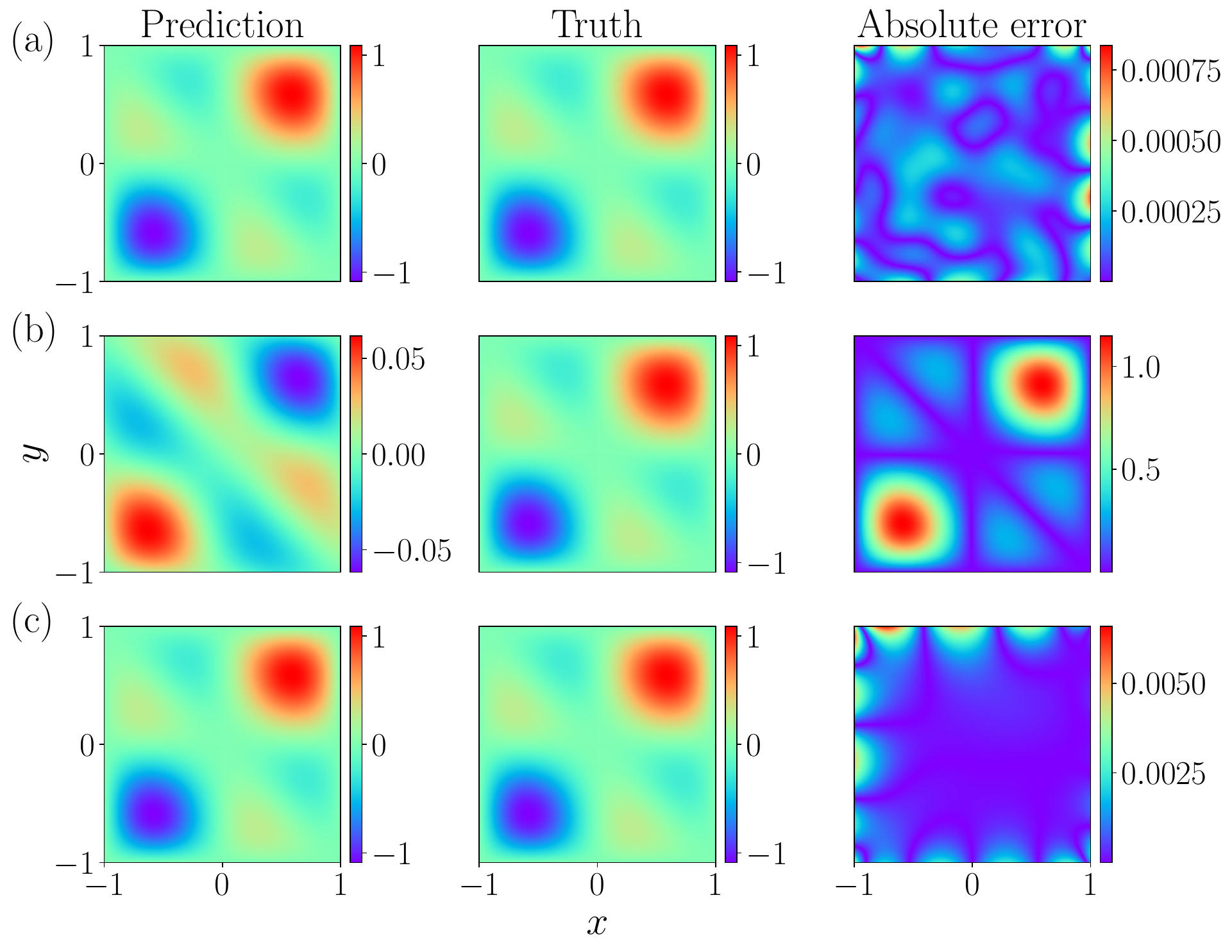}
    \caption{Comparison of predicted $u(x, y)$ for Helmholtz equation using different methods. (a) KP-PINNs ($\nu=\frac{3}{2}$) forward results. (b) $L_2$-PINNs inverse results. (c) KP-PINNs ($\nu=\frac{1}{2}$) inverse results.}
    \label{fig:combined-helmholtz}
\end{figure}

In this experiment, $N_{\mathcal{B}}=100$ and $N_{\mathcal{L}}=100\times 100$ in \eqref{eq: MSE forward}. For the inverse problem, parameter $k$ is unknown and $N_{\mathcal{B}}+N_{\mathcal{L}}=30\times 30$ in \eqref{eq: MSE forward}. In both cases $N_{\text{test}}$ is equal to $500\times 600$. 
Table \ref{exp-helmholtz} gives the average results of five times. It can be seen that the first three algorithms can get the solutions well. For inverse problems, only KP-PINNs ($\nu=\frac{1}{2}$) and RKHS-PINNs ($\nu=\frac{1}{2}$) can predict the $k$ relatively correctly, while the $L_2$-PINNs fails to work. Figure \ref{fig:combined-helmholtz} (b), Figure \ref{fig:combined-helmholtz} (c) and Figure \ref{Helmholtz-LQG-parameters} (a) show the inverse problem results of $L_2$-PINNs and KP-PINNs ($\nu=\frac{1}{2}$). The parameter $\nu$ determines the degree of smoothness for the kernel function, and a smaller value of $\nu$ is suitable for less smooth conditions. It can be seen that a smaller $\nu$ (such as $\nu=\frac{1}{2}$) can get better results.

\subsection{Linear Quadratic Gaussian (LQG) Equation} \label{s:exper.4}
One application of the HJB equation is LQG control problem. 
The form of LQG \cite{han2018solving} in $d$ dimensions are
\begin{equation}
\begin{cases}
\partial_tu(\mathbf{x},t)+\Delta u(\mathbf{x},t)-\mu\norm{\nabla_\mathbf{x}u(\mathbf{x},t)}^2=0, \\
\mathbf{x}\in\mathbb{R}^d,t\in[0,T], \\
u(\mathbf{x},T)=g(\mathbf{x}), \mathbf{x}\in\mathbb{R}^d,
\end{cases}
\end{equation}
with the solution  
\begin{equation}
u(\mathbf{x},t)= -\frac1\mu \ln\left(\int_{\mathbb{R}^d} (2\pi)^{\frac{d}{2}} \mathrm{e}^{-\frac{\|y\|^2}{2}} \mathrm{e}^{-\mu g(\mathbf{x}-\sqrt{2(T-t)}y)}\mathrm{d}y \right),
\end{equation}
where $g(\mathbf{x})=\ln\left(\frac{1+\norm{\mathbf{x}}^2}2\right)$ is the terminal cost function.  


\begin{figure}
    \centering
    \includegraphics[width=\linewidth]{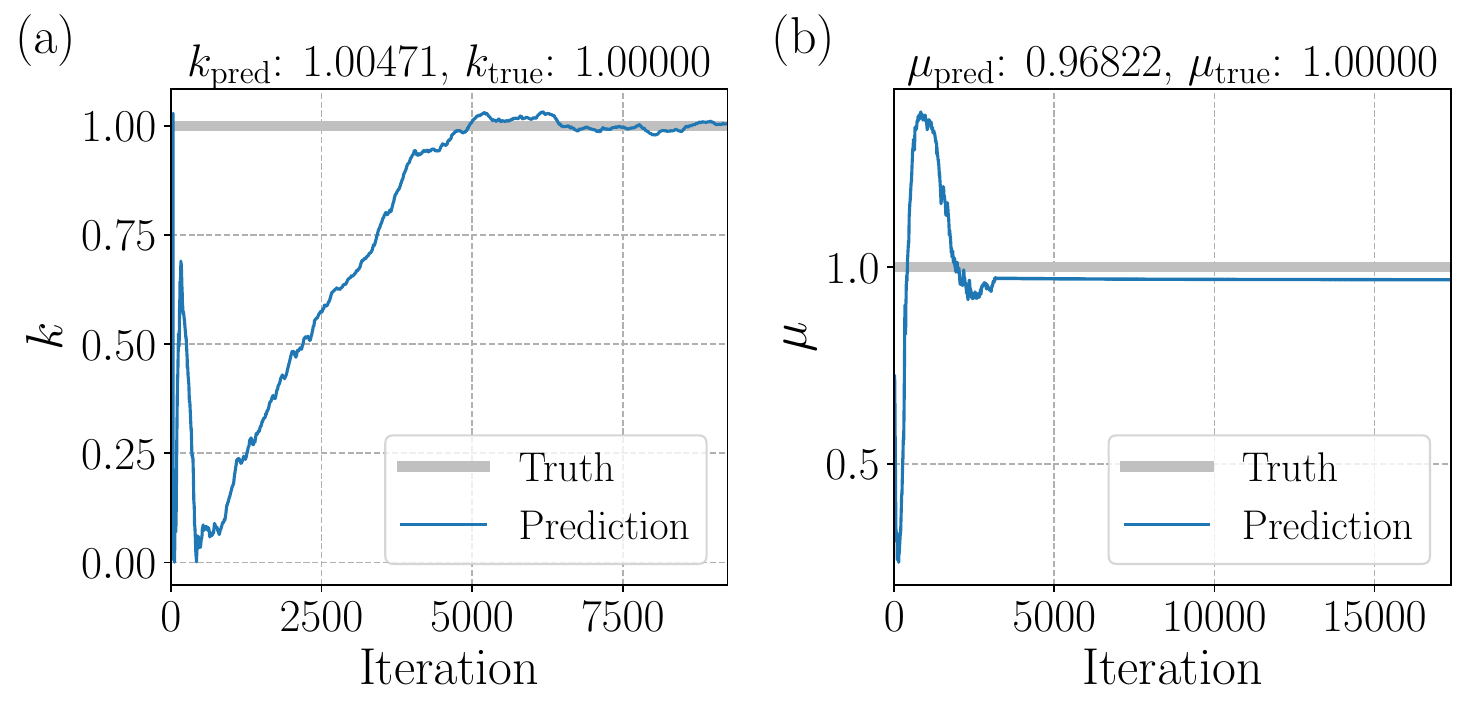}
    \caption{KP-PINNs ($\nu=\frac{1}{2}$) inverse results. (a) Helmholtz equation - predicted $k$. (b) LQG equation - predicted $\mu$.}
    \label{Helmholtz-LQG-parameters}
\end{figure}

\begin{table} [b] 
\renewcommand{\arraystretch}{1.15}
\centering
\resizebox{0.48\textwidth}{!}{%
\begin{tabular}{lcccccc}
\toprule
\multicolumn{1}{c}{Algorithm (-PINNs)} & Forward & Standard & Inverse & Standard & Inv $\mu$ ($1.0$) \\
\midrule
KP ($\nu=\frac{1}{2}$)          & \textbf{6.70e-02} & 2.43e-02 & \textbf{6.90e-03} & 8.84e-04 & \textbf{0.93244} \\
KP ($\nu=\frac{3}{2}$)          & 2.26e-01 & 8.56e-02 & 1.81e-02 & 1.45e-03 & 1.26715 \\
KP ($\nu=\frac{5}{2}$)          & 3.46e-01 & 1.89e-02 & 1.88e-02 & 2.29e-03 & 1.27243 \\
$L_2$                           & 2.39e-01 & 9.80e-02 & 1.77e-02 & 1.36e-03 & 1.20942 \\
RKHS ($\nu=\frac{1}{2}$)        & 1.71e-01 & 6.90e-02 & 8.25e-03 & 2.28e-03 & 0.93072 \\
RKHS ($\nu=\frac{3}{2}$)        & 3.21e-01 & 8.89e-03 & 1.88e-02 & 2.17e-03 & 1.27210 \\
RKHS ($\nu=\frac{5}{2}$)        & 3.39e-01 & 5.54e-03 & 1.89e-02 & 1.94e-03 & 1.25183 \\
Sobolev ($\nu=1$)               & 1.42e+00 & 3.43e-02 & 6.05e-01 & 9.27e-02 & -1.21866 \\
Sobolev ($\nu=2$)               & 1.43e+00 & \textbf{3.37e-04} & 7.12e-03 & \textbf{9.39e-05} & -129.13841 \\
Sobolev ($\nu=3$)               & 7.05e-02 & 3.33e-02 & 1.51e-02 & 3.12e-03 & 1.16909 \\
\bottomrule
\end{tabular}%
}
\caption{Experimental results of LQG equation.}
\label{exp-lqg}
\end{table}

We set $d=2, \mu = 1.0, T = 1.0, N_{\mathcal{B}}=2000$ and $N_{\mathcal{L}}=30\times 20\times 10$ in \eqref{eq: MSE forward} in the forward problem. For the inverse problem, $\mu$ is unknown and $N_{\mathcal{B}}+N_{\mathcal{L}}=10\times 10\times 10$. In both cases, $N_{\text{test}}$ is equal to $100\times 80\times 50$.
Table \ref{exp-lqg} gives the average results of three times experiments. For the inverse problem, KP-PINNs with $\nu=\frac{1}{2}$ can obtain the closest estimation among all algorithms. Figure \ref{Helmholtz-LQG-parameters} (b) gives a one-case inverse problem results of KP-PINNs ($\nu=\frac{1}{2}$).

\subsection{Navier-Stokes (NS) Equation} \label{s:exper.5}
The widely-used NS equations are fundamental PDEs that describe the motion of viscous and incompressible fluids. In this experiment, the two-dimensional incompressible NS equations are considered to model the evolution of fluid flow.
The governing equation for the vorticity $\omega(t, x, y)$ is
\begin{equation}
\begin{split}
\omega_t + u \omega_x + v \omega_y &= \mu (\omega_{xx} + \omega_{yy}), \\
t \in [0, T], x \in [x_{\text{start}}, &x_{\text{end}}], y \in [y_{\text{start}}, y_{\text{end}}],
\end{split}
\end{equation}
where $\mu$ is the viscosity coefficient. In our experiment, we set $\mu = 0.01$, $T=1.9$, $x_{\text{start}}=1.0$, $x_{\text{end}}=8.0$, $y_{\text{start}}=-2.0$ and $y_{\text{end}}=2.0$. The initial condition is the situation of $t=0$. The boundary conditions are the situation of $x=x_{\text{start}}$, $x=x_{\text{end}}$, $y=y_{\text{start}}$ and $y=y_{\text{end}}$.
The vorticity is defined in terms of the stream function $\psi(t, x, y)$ by the Poisson equation $\omega = -(\psi_{xx} + \psi_{yy})$.
The velocity components $(u, v)$ can be recovered from the stream function as $u = \psi_y, \quad v = -\psi_x$.
The two components of the velocity field data $u(t,x,y)$ and $v(t,x,y)$ are obtained from \cite{raissi2019physics}, where the numerical solution is performed.
We approximate the stream function $\psi$ using a neural network. The velocity field and vorticity can be derived by automatic differentiation.

\begin{table} [!bt] 
\renewcommand{\arraystretch}{1.15}
\centering
\resizebox{0.48\textwidth}{!}{%
\begin{tabular}{lcccccc}
\toprule
\multicolumn{1}{c}{Algorithm (-PINNs)} & Forward & Standard & Inverse & Standard & Inv $\mu$ ($0.01$) \\
\midrule
KP ($\nu=\frac{1}{2}$)          & \textbf{1.66e-01} & 9.90e-03 & \textbf{3.61e-02} & 1.75e-02 & \textbf{0.01176} \\
$L_2$                           & 3.97e-01 & 1.18e-01 & 1.40e-01 & 5.46e-02 & 0.01233 \\
RKHS ($\nu=\frac{1}{2}$)        & 1.73e-01 & \textbf{6.66e-03} & 4.87e-02 & \textbf{1.07e-02} & 0.01201 \\
\bottomrule
\end{tabular}%
}
\caption{Experimental results of NS equation. As demonstrated in the experiments above, a smaller value of $\nu$ is more suitable for less smooth conditions. For the NS equation, KP-PINNs ($\nu=\frac{1}{2}$) and RKHS-PINNs ($\nu=\frac{1}{2}$) are employed to evaluate the performance.}
\label{exp-ns}
\end{table}
We set $N_{\mathcal{B}}=5000$ and $N_{\mathcal{L}}=30\times 20\times 10$ in \eqref{eq: MSE forward}. For the inverse problem, $\mu$ is unknown and $N_{\mathcal{B}}+N_{\mathcal{L}}=10\times 10\times 10$. In both cases, $N_{\text{test}} = 98\times 48\times 20$. 
Table \ref{exp-ns} shows the averaged results over three independent runs. It can be seen that KP-PINNs, $L_2$-PINNs and RKHS-PINNs can solve the equation in general. Among them, KP-PINNs and RKHS-PINNs consistently demonstrate superior performance in terms of both accuracy and robustness. For the inverse problem, KP-PINNs($\nu=\frac{1}{2}$) can obtain the closest estimation.  

\begin{table} [!bt] 
\renewcommand{\arraystretch}{1.15}
\centering
\resizebox{0.48\textwidth}{!}{%
\begin{tabular}{lcccc}
\toprule
\multicolumn{1}{c}{Algorithm (-PINNs)}   & Stiff    & Helmholtz   & LQG   & NS  \\ 
\midrule
KP ($\nu=\frac{1}{2}$)      & \textbf{0.2971$s$}             & \textbf{5.3152$s$}                & \textbf{90.7398$s$}     & \textbf{26.9730$s$}      \\
RKHS ($\nu=\frac{1}{2}$)    & 2.6984$s$             & 42.1662$s$               & 431.0374$s$       & 56.9642$s$  \\
\bottomrule
\end{tabular}
}
\caption{Time of computing $\mathbf{K}^{-1}$ ($\nu=\frac{1}{2}$ as example). $N_{\mathcal{B}}+N_{\mathcal{L}}=2000$ for Stiff equation. $N_{\mathcal{B}}+N_{\mathcal{L}}=200\times 300$ for Helmholtz equation. $N_{\mathcal{B}}+N_{\mathcal{L}}=50\times 50\times 40$ for LQG equation. $N_{\mathcal{B}}+N_{\mathcal{L}}=60\times 40\times 20$ for NS equation.}
\label{k-time}
\end{table}

\section{Conclusion} \label{s:conclusion}
This work proposes KP-PINNs, an efficient method for solving differential equations, with stability guarantees for second-order linear elliptic and HJB equations under certain conditions. Numerical experiments demonstrate that KP-PINNs outperform $L_2$-PINNs and Sobolev-PINNs for both forward and inverse problems, significantly accelerating the computation of the inverse of kernel matrix $\mathbf{K}$ through the KP technique. Future directions include extending KP-PINNs to handle sparse or irregular data, optimizing neural network architecture, and refining kernel selection in the loss function to improve accuracy and robustness. These advancements aim to enhance KP-PINNs' versatility in solving complex differential equations.

\section*{Acknowledgments}
This research is supported by Guangdong Provincial Fund - Special Innovation Project (2024KTSCX038).

\bibliographystyle{named}
\bibliography{ijcai25}


\end{document}